\newtheorem*{theorem}{Theorem}
\newcommand{\D}{{\rm d}}
\newcommand{\I}{{\rm i}}
\title{\boldmath An analytic approach to quasinormal modes for coupled linear systems}
\author[a]{Lam Hui,}
\author[a]{Alessandro Podo,}
\author[b]{Luca Santoni,}
\author[c,d]{Enrico Trincherini}
\affiliation[a]{Center for Theoretical Physics and Department of Physics,
Columbia University, New York, \\ 538 West 120th Street, NY 10027, USA}
\affiliation[b]{ICTP, International Centre for Theoretical Physics,
 Strada Costiera 11, 34151, Trieste, Italy}
\affiliation[c]{Scuola Normale Superiore, Piazza dei Cavalieri 7, 56126, Pisa, Italy}
\affiliation[d]{Istituto Nazionale di Fisica Nucleare (INFN) - Sezione di Pisa \\ Polo Fibonacci Largo B. Pontecorvo, 3, I-56127 Pisa, Italy}
\emailAdd{lh399@columbia.edu}
\emailAdd{ap3964@columbia.edu}
\emailAdd{lsantoni@ictp.it}
\emailAdd{enrico.trincherini@sns.it}
\abstract{Quasinormal modes describe the ringdown of compact objects deformed by small perturbations. In generic theories of gravity that extend General Relativity, the linearized dynamics of these perturbations is described by a system of coupled linear differential equations of second order. We first show, under general assumptions, that such a system can be brought to a Schr\"odinger-like form. We then devise an analytic approximation scheme to compute the spectrum of quasinormal modes. We validate our approach using a toy model with a controllable mixing parameter $\varepsilon$ and showing that the analytic approximation for the fundamental mode agrees with the numerical computation when the approximation is justified. The accuracy of the analytic approximation is at the (sub-) percent level for the real part and at the level of a few percent for the imaginary part, even when $\varepsilon$ is of order one. Our approximation scheme can be seen as an extension of the approach of Schutz and Will~\cite{Schutz:1985km} to the case of coupled systems of equations, although our approach is not phrased in terms of a WKB analysis, and offers a new viewpoint even in the case of a single equation.}
\begin{document}
\maketitle
\flushbottom

\section{Introduction}

Quasinormal modes (QNMs) describe damped vibrational modes of dissipative systems~\cite{Ching:1998mxl}, formally associated with a non-hermitian boundary value problem. They are essential tools in the study of the dynamics of astrophysical compact objects~\cite{Kokkotas:1999bd} and they encode hydrodynamical properties of strongly coupled systems through the gauge/gravity duality~\cite{Berti:2009kk}.
The study of quasinormal modes of black holes was initiated long ago in the early 1970s with the works of Vishveshwara~\cite{Vishveshwara:1970zz}, Press~\cite{Press:1971wr}, Teukolsky~\cite{Teukolsky:1974yv}, Chandrasekhar and Detweiler~\cite{Chandrasekhar:1975zza} (see for instance~\cite{Chandrasekhar:1985kt} for an introduction), and efficient computational methods are available for the well-known black hole solutions of pure General Relativity (GR)~\cite{Leaver:1985ax,Konoplya:2019hlu,Hatsuda:2019eoj}. 
Renewed interest has been spawned by experimental detection ~\cite{LIGOScientific:2020tif} which allows the observational test of theories that extend or modify GR. 
This includes the study of QNMs in selected theories beyond GR~\cite{Blazquez-Salcedo:2016enn,Blazquez-Salcedo:2017txk,Pierini:2021jxd,Wagle:2021tam,Srivastava:2021imr,Langlois:2022eta}, the development of effective field theories to parametrize and test extensions of GR~\cite{Endlich:2017tqa,Tattersall:2017erk,Franciolini:2018uyq,Hui:2021cpm,Cano:2020cao,Cano:2021myl,Mukohyama:2022enj,Mukohyama:2022skk,Khoury:2022zor} and
the proposal of new non-perturbative methods that relate the perturbation theory of explicit black hole solutions to Seiberg--Witten gauge theories~\cite{Aminov:2020yma,Bianchi:2021xpr,Bonelli:2021uvf,Bianchi:2021mft}. Recent developments in the understanding of the validity of perturbation theory further raise the hope of precision measurements in the future 
\cite{Giesler:2019uxc,Isi:2019aib,Mitman:2022qdl,Lagos:2022otp,Cheung:2022rbm}.

In this work we present a new analytic approximation scheme for the computation of the spectrum of QNMs of a system of (in general $N$) coupled linear differential equations in Schr\"odinger-like form. Our approach, based on a generalized adiabatic theorem, will also offer a new viewpoint on the regime of validity of the approximation of Schutz and Will~\cite{Schutz:1985km}.
Moreover we shall prove that, when an action formulation is available, it is in general possible to put the system of linear equations for the perturbations of a compact object in Schr\"odinger-like form and we shall describe an algorithm to do so, starting from the quadratic action in frequency space, performing an angular decomposition and some field redefinitions. 
The problem of finding the QNMs for a system of coupled linear differential equations has received significant attention in the recent literature. Alternative approaches to their study have been devised, including the phenomenological approaches of Refs.~\cite{Cardoso:2019mqo,McManus:2019ulj,Volkel:2022aca}, the eikonal approach of Refs.~\cite{Glampedakis:2019dqh,Silva:2019scu,Bryant:2021xdh} and the first-order formulation of Refs.~\cite{Langlois:2021xzq,Langlois:2021aji,Langlois:2022ulw}.

The results of our analysis can be conveniently summarized in a simple procedure for the computation of the quasinormal modes of a coupled linear system in Schr\"odinger-like form. We state it in the case of two coupled equations for simplicity. Under the conditions detailed in the paper, the procedure is as follows. Given a $2 \times 2$ matrix potential $\mathbf{V}(\tau)$, where $\tau$ is the radial tortoise coordinate, one has to:
\begin{itemize}
\item find the eigenvalues $v_i(\tau)$ of $\mathbf{V}(\tau)$ and identify the maxima of $v_i(\tau)$, determined by the condition $v_{i}'(\hat{\tau}_i)=0$. Each physical maximum will be associated with a tower of quasinormal modes.
\item for each value of $\hat{\tau}_i$, go back to the original formulation in terms of $\mathbf{V}(\tau)$ and consider the Taylor expansion of $\mathbf{V}(\tau)$ around $\hat{\tau}_i$.
\item perform a \emph{$\tau$-independent} change of basis that diagonalizes the constant term in the expansion of $\mathbf{V}(\tau)$. The potential takes now the form \eqref{eq:potential_expanded}, up to row exchange, where $x=\tau-\hat{\tau}_i$.
\item in terms of these coefficients $a_i,b_i,\dots$, the leading order quantization condition is given by eq.~\eqref{eq:leading_qnm}, the quadratic mixing correction by eq.~\eqref{eq:mixing_qnm} and the one-loop anharmonic corrections are captured by eq.~\eqref{eq:oneloop_qnm}.
\end{itemize}
These analytic formulas for the computation of the quasinormal frequencies, together with the conceptual formulation of the approximation scheme, are the main results of this work. They can be thought of as a generalization of the classic results of Schutz and Will, which apply for a non-matrix (i.e. $1 \times 1$) potential~\cite{Schutz:1985km}.
It is also worth stressing that our formalism applies also to the case of an $N \times N$ potential ${\bf V}$. The steps outlined above are applicable to the full set of eigenvalues of ${\bf V}$ and the only non-trivial question is how to generalize the computation of the mixing corrections. At the order we consider in perturbation theory, it is enough to consider separately the mixing corrections from each off diagonal term $(i,j)$ (with $i\neq j$), which can be extracted by considering the $2\times 2$ submatrix 
\begin{equation}
\begin{pmatrix}
V_{ii}(x) & V_{ij}(x) \\
V_{ji}(x)  & V_{jj}(x)\\
\end{pmatrix},
\end{equation}
and are given by the analogue of eq.~\eqref{eq:mixing_qnm}. The corrections to $\omega^2$ from all the mixing terms with $j \neq i$ should then be summed.

Even though the leading order quantization condition~\eqref{eq:leading_qnm} may look formally analogous to that derived for the $N=1$ case in Ref.~\cite{Schutz:1985km}, the result of our procedure, already at leading order, is \emph{inequivalent} to that obtained by just neglecting the mixing terms from the start. Indeed our analysis is necessary to correctly identify the point around which to expand the potential. Ignoring the mixing terms from the beginning will lead to an incorrect identification of the appropriate maximum. In addition, the constant off-diagonal terms are reabsorbed in the third step of the previous procedure: the mixing terms (not necessarily small) are partly taken into account already by the leading order result. Finally, the subleading mixing and anharmonic corrections we provide a systematic way to go beyond the leading order quantization condition.

The article is organized as follows. In Section~\ref{sec:perturbations} we introduce the Schr\"odinger-like equations governing the linear dynamics of perturbations around black holes and review the usual formulation of the boundary value problem associated with QNMs. In Section~\ref{sec:hamiltonian} we then reformulate the problem, first in terms of a classical Hamiltonian dynamical system and then in terms of a quantum mechanical system evolving with a non-hermitian time-dependent evolution operator. We use this formulation to outline our approximation scheme, based on a generalized adiabatic theorem. We then derive analytically the (approximate) quantization conditions for the QNMs for a general system of two coupled equations, in Section~\ref{sec:approximation}, including anharmonic corrections up to one-loop (corresponding to a second order WKB analysis) and mixing correction up to quadratic order in the mixing parameter. We validate our approach by studying an explicit example in Section~\ref{sec:example},  and then draw our conclusions in Section~\ref{sec:discussion}. The detailed proofs of some technical results of relevance for the paper are provided in the appendices. In particular, Appendix~\ref{app:Schreqs} contains a proof that the system of equations for the perturbations can be always put in Schr\"odinger-like form whenever an action formulation is available, and Appendix~\ref{app:adiabatic} contains a proof of the generalized adiabatic theorem for non-hermitian evolution operators acting on a finite dimensional vector space.

\section{Perturbations around black holes}\label{sec:perturbations}

In Einstein gravity the equations governing the linear dynamics of massless spin $s$ perturbations $\psi_{s}(r_\star)$ ($s=0,1,2$) around black holes can be cast in the form of a Schr\"odinger-like equation of the form 
\begin{equation}
\left( \dfrac{\rm d^{2}}{{\rm d}r_\star^{2}} + \omega^{2} - V(r_\star) \right) \psi_{s}(r_\star) =0,
\end{equation}
where $r_\star$ corresponds to the radial tortoise coordinate, the time dependence of the perturbation has been taken as $e^{-i \omega t}$ and the angular dependence has been separated through the use of the appropriate angular functions --- spherical harmonics for Schwarzschild black holes and spin-weighted spheroidal harmonics for Kerr black holes. The potential function $V(r_\star)$ --- which depends on the background, the spin $s$ of the field and the angular momentum $\ell$ of the perturbation --- can be chosen to be real and in the case of gravitational ($s=2$) perturbations on the Schwarzschild background is given by the Regge--Wheeler (odd sector) and Zerilli (even sector) potentials; with these choices, in the rotating case, it also depends on the frequency $\omega$.

More generally, in the effective field theory approach to black hole perturbations in scalar-tensor theories it is always possible to obtain a Schr\"odinger-like equation
for the perturbations around spherically symmetric~\cite{Franciolini:2018uyq} or slowly rotating~\cite{Hui:2021cpm} \emph{hairy black holes}, backgrounds with a non trivial static scalar profile. We show this explicitly and in full generality, under a stability requirement, in Appendix~\ref{app:Schreqs}. However, typically, one has to deal with a system of coupled equations rather than a single equation. This is the case, in particular, for the even sector of a spherically symmetric background, where the scalar degree of freedom is coupled to the even gravitational perturbation. 

Our goal is to develop an analytic approximation scheme to compute the quasinormal frequencies of a Schr\"odinger-like system of $N$ coupled equations in terms of a finite number of coefficients, generalizing the approach first introduced in Ref.~\cite{Schutz:1985km}.

\subsection{The quasinormal modes}

The quasinormal modes are defined by the following boundary value problem: given a smooth $N \times N$ matrix function $\mathbf{V}(r_\star)$ (the potential) defined for $-\infty < r_\star < \infty$ and satisfying the fall-off condition
\begin{equation}
\label{eq:fall_off}
\lim_{r_\star \rightarrow \pm\infty} \lvert r_\star\,\mathbf{V}(r_\star) \rvert =\mathbf{0},
\end{equation}
we want to find the values of $\omega$ such that the Schr\"odinger-like system of equations 
\begin{equation}
\label{eq:general_system}
\left( \dfrac{\rm d^{2}}{{\rm d}r_\star^{2}} + \omega^{2} - \mathbf{V}(r_\star) \right) \vec{\psi}(r_\star) =0,
\end{equation}
admits a solution satisfying the asymptotic conditions
\begin{equation}
\vec{\psi} \sim 
\begin{cases}
\vec{\psi}_{+} e^{+\mathrm{i} \omega r_\star} \qquad {\rm for} \quad r_\star\rightarrow +\infty \\
\vec{\psi}_{-} e^{-\mathrm{i} \omega r_\star} \qquad {\rm for} \quad r_\star\rightarrow -\infty . \\
\end{cases}
\end{equation}
Physically, these conditions correspond to an outgoing (right-moving) wave at spatial infinity $r_\star \rightarrow +\infty$ and to an ingoing (left-moving) wave at the black hole horizon $r_\star \rightarrow -\infty$.
The special values of $\omega$ that admit such a solution are known as the quasinormal frequencies $\omega_{\rm qnm}$ and the corresponding functions are the quasinormal modes. Due to the boundary conditions, this boundary value problem is \emph{not} self-adjoint, so that the quasinormal frequencies can be (and usually are) complex numbers and the corresponding functions are usually not normalizable.

We shall assume that the eigenvalues of the potential $\mathbf{V}(r_\star)$ are real. In some instances the potential can be a function of the frequency $\omega$ itself; in such a case we shall assume that the eigenvalues are real for real $\omega$. In what follows we shall work under the simplifying assumption of a potential independent of the frequency and then comment on how to generalize our methods to the case of frequency-dependent potentials. In Sec.~\ref{sec:example} we treat an example in which the eigenvalues of $\mathbf{V}$ are complex in a range of $r_\star$ and discuss the regime of validity of our approach. Notice that, as we show in Appendix~\ref{app:Schreqs}, it is always possible to obtain a Schr\"odinger-like system with hermitian potential for the perturbations of a stationary spacetime background if an action formulation is available. In those cases the eigenvalues of the potential will be real (for real $\omega$), and our approach will be thus applicable.
Moreover, as it will become clear in the following sections, the validity of our approach (and of that of Schutz and Will~\cite{Schutz:1985km} as well) is related to the smoothness of the potential.
Examples with perturbations, such as those of Ref.~\cite{Cheung:2021bol}, can violate this assumption, since, depending on the shape of the perturbation (\emph{e.g.} a bump), the derivative of the evolution operator can be large even for perturbations of arbitrarily small amplitude. However, as discussed in~\cite{Mirbabayi:2018mdm,Hui:2019aox}, over short time scales the ringdown is described by the QNMs of the unperturbed potential and the use of our approach is justified.

\section{The Hamiltonian formulation and a generalized adiabatic theorem}\label{sec:hamiltonian}

The system of differential equations~\eqref{eq:general_system} admits an Hamiltonian formulation. In order to make contact with the standard notation of Hamilton's mechanics, in this section we shall map the variable $r_\star$ to a fictitious time coordinate $\tau$ and the fields $\psi_{i}$ to the generalized coordinates $q_{i}$, and introduce the canonical conjugate momenta $p_{i}$. The time dependent Hamiltonian 
\begin{equation}
H(q,p,\tau) = \dfrac{1}{2} p_{i}p_{i} + \dfrac{1}{2} q_{i} \left( \omega^{2}\delta_{ij} -V_{ij}(\tau) \right) q_{j},
\end{equation}
reproduces the system of equations~\eqref{eq:general_system}, as can be easily checked by using Hamilton's equations 
\begin{equation}
\dot{q}_{i}= \dfrac{\partial H}{ \partial p_{i}} = p_{i}, \qquad \dot{p}_{i}=- \dfrac{\partial H}{ \partial q_{i}} =- \left( \omega^{2}\delta_{ij} -V_{ij}(\tau) \right) q_{j}.
\end{equation}
In this language it becomes transparent that it is not possible in general to bring the system of differential equations~\eqref{eq:general_system} in diagonal form through a point transformation $Q_{i}= f_{i}(q_{j},\tau)$. Indeed, such a change of variable corresponds to a canonical transformation with generating function $F_{2}(q_{i},P_{i})= P_{i}f_{i}(q_{j},\tau)$, so that the generalized coordinates and the Hamiltonian transform as 
\begin{equation}
p_{i} \rightarrow \dfrac{\partial F_{2}}{\partial q_{i}} = P_{j} \dfrac{\partial f_{j}(q_{k},\tau)}{\partial q_{i}}, \quad
Q_{i} \rightarrow \dfrac{\partial F_{2}}{\partial P_{i}} = f_{i}(q_{k},\tau), \quad
H(q,p) \rightarrow H(q(Q),p(P,Q))+\dfrac{\partial F_{2}}{\partial \tau}.
\end{equation}
The term $H(q(Q),p(P,Q))$ in the new Hamiltonian does not include terms linear in the generalized momenta $P_{i}$, whereas the term $\frac{\partial F_{2}}{\partial \tau}$ is linear in the momenta, being equal to $P_{i}\frac{\partial f_{i}(q_{j},\tau)}{\partial \tau}$. Therefore, the point transformation always introduces a mixing term between the new coordinates and momenta, unless $f_{i}(q_{j})$ is time independent. But a time independent coordinate transformation can diagonalize the potential term $q_{i}V_{ij}(\tau)q_{j}$ for all times $\tau$ only if the latter has the simple form $V_{ij}(\tau)=v(\tau) \bar{V}_{ij}$, with constant $\bar{V}_{ij}$, which is not the case in general.

Let us go back to the boundary problem and try to gain further insights in the Hamiltonian language. Hamilton's equations written in matrix form are
\begin{equation}
\dfrac{\rm d}{{\rm d}\tau} 
\begin{pmatrix}
q_{i} \\ p_{i} 
\end{pmatrix}
=
\begin{pmatrix}
0 & \hspace{5pt} \delta_{ij} \\
-\omega^{2} \delta_{ij} + V_{ij}(\tau) & \hspace{5pt} 0 \\
\end{pmatrix}
\begin{pmatrix}
q_{j} \\ p_{j} 
\end{pmatrix},
\end{equation}
which, when multiplied by the imaginary unit $\mathrm{i}$, is formally the time-dependent Schr\"odinger equation with a non-hermitian evolution operator that we denote with $\mathbf{K}$.
Since $\vert \tau\, V_{ij}(\tau)\vert \rightarrow 0 $ for $\tau \rightarrow \pm \infty$, it is natural to perform a (complex) canonical transformation that diagonalizes the time evolution in the asymptotic region.\footnote{The connection between a classical system of coupled harmonic oscillators, complex coordinates and quantum mechanics was pointed out long ago by Strocchi~\cite{Strocchi:1966}. Here, however, we are using this correspondence in a broader sense, trying to acquire intuition on the dynamics of a classical system of coupled oscillators by using some familiar intuition from quantum mechanics.}
Defining
\begin{equation}
\xi_{i}^{+} = \dfrac{1}{\sqrt{2}}\left( \sqrt{\omega}\,q_{i} +\mathrm{i} \dfrac{p_{i}}{\sqrt{\omega}}\right), \qquad \xi_{i}^{-} = \dfrac{1}{\sqrt{2}}\left( \sqrt{\omega}\,q_{i} - \mathrm{i} \dfrac{p_{i}}{\sqrt{\omega}}\right), 
\end{equation}
the evolution equation becomes 
\begin{equation}
\mathrm{i}\dfrac{\rm d}{{\rm d}\tau} 
\begin{pmatrix}
\xi_{i}^{+} \\ \xi_{i}^{-}
\end{pmatrix}
=
\left[ 
\begin{pmatrix}
+\omega & 0 \\
0 & -\omega \\
\end{pmatrix} \delta_{ij}
- \dfrac{V_{ij}(\tau)}{2\omega}
\begin{pmatrix}
+1 & +1 \\
-1 & -1\\
\end{pmatrix}
\right]
\begin{pmatrix}
\xi_{j}^{+} \\ \xi_{j}^{-}
\end{pmatrix},
\end{equation}
while the boundary conditions are simply
\begin{equation}
\label{eq:boundary_conditions}
\begin{cases}
&\xi_{i}^{+} \rightarrow 0, \quad \hspace{16pt} \xi_{i}^{-} \sim e^{\mathrm{i} \omega \tau}, \qquad {\rm for} \quad \tau \rightarrow +\infty \\
&\xi_{i}^{+} \sim e^{-\mathrm{i} \omega \tau}, \quad \xi_{i}^{-} \rightarrow 0, \qquad \hspace{10pt}{\rm for} \quad \tau \rightarrow -\infty. 
\end{cases}
\end{equation}
In this basis it becomes transparent that for $\omega$ real in the asymptotic region $\vert \tau \vert \rightarrow \infty$ the evolution is unitary and governed by the asymptotic eigenvalues $\pm \omega$. The boundary conditions correspond to the requirement that the system undergoes a transition from an eigenstate with eigenvalue $+\omega$ at early times to one with eigenvalue $-\omega$ at late times. In the case of a slowly varying Hermitian evolution operator, the adiabatic theorem of quantum mechanics implies that transitions occur efficiently only when two eigenvalues become (almost) degenerate, as in the well-known Landau-Zener effect (\emph{i.e.}~close to a level crossing, usually lifted by perturbations in a quantum mechanical system).

We shall establish a mild generalization of the adiabatic theorem for evolution operators that are diagonalizable, have a real spectrum, and act on a finite dimensional vector space. 

\subsection{An adiabatic theorem}
\label{sec:adiabatic}

The evolution operator $\mathbf{K}$ can be regarded as an operator acting on the $2N$-dimensional (complex) vector space with elements $(\mathbf{q},\mathbf{p})^{T}$. 
Let us assume that the potential $\mathbf{V}(\tau)$, acting on the $N$-dimensional $\mathbf{q}$ subspace, is a diagonalizable operator with real eigenvalues~$v_{i}$, $i=1,\dots,N$.\footnote{We often suppress the explicit time dependence for ease of notation.} Then the evolution operator
\begin{equation}
\mathbf{K}(\tau)= \begin{pmatrix}
0 & \hspace{5pt} \mathrm{i}\,\mathbf{1}_{N} \\
-\mathrm{i}\,\omega^{2} \mathbf{1}_{N} + \mathrm{i}\, \mathbf{V}(\tau) & \hspace{5pt} 0 \\
\end{pmatrix}
\end{equation}
is diagonalizable for every time $\tau$, with eigenvalues determined by the zeros of the characteristic polynomial:
\begin{equation}
\label{eq:char_pol}
p(\lambda)=\det\left( \lambda \mathbf{1}_{2N} -\mathbf{K}\right)= \det\left( \lambda^{2} \mathbf{1}_{N}- \omega^{2} \mathbf{1}_{N} + \mathbf{V}(\tau)\right),
\end{equation}
where in the second equality we used that the identity matrix commutes with every operator. From eq.~\eqref{eq:char_pol} it follows that the eigenvalues occur always in pairs and are such that their square is an eigenvalue of $(\omega^{2} \mathbf{1}_{N} - \mathbf{V})$. In particular:
\begin{equation}
\label{eq:eigenvalues}
\lambda_{i,\pm}= \pm \sqrt{\omega^{2}-v_{i}}.
\end{equation}

Since the operator $\mathbf{K}$ is not Hermitian its eigenvectors $\lvert k_{a} \rangle$ (with $a=1,\dots,2N$) do not form an orthonormal basis with respect to the standard inner product; however, \emph{they still form a basis}. Moreover, since the Hilbert space is finite dimensional we can consider the dual basis $\langle \tilde{k}_{a} \rvert$, with $a=1,\dots,2N$, defined by
\begin{equation}
\langle \tilde{k}_{a} \vert k_{b} \rangle = \delta_{ab},
\end{equation} which exists and is unique. The dual basis vectors act as projectors on the desired eigenspaces. By the uniqueness of the dual basis, it follows that given an eigenvector $\lvert k_{a} \rangle$ of $\mathbf{K}$ with eigenvalue $\lambda_{a}$, the dual vector $\langle \tilde{k}_{a} \rvert$ is a left eigenvector with eigenvalue $\lambda_{a}$:
\begin{equation}
\label{eq:left_eig}
\langle \tilde{k}_{a} \rvert \mathbf{K}= \lambda_{a} \langle \tilde{k}_{a} \rvert .
\end{equation}

As a last ingredient for the proof of the adiabatic theorem we notice that for any fixed value of $\omega^{2}>0$, at early times $\tau\rightarrow -\infty$ the eigenvalues $\lambda_{i,\pm}$ are real, for every $i$. This follows simply from the eigenvalue equation~\eqref{eq:eigenvalues}, the assumption that $\omega$ and $v_{i}(\tau)$ are real and the fall-off condition \eqref{eq:fall_off}, which implies $v_{i}(\tau\rightarrow -\infty)\rightarrow 0$. By continuity, the reality of the eigenvalues can be violated only after a level crossing: as $v_{i}(\tau) \rightarrow \omega^{2}$ the two eigenvalues $\lambda_{i,\pm}\rightarrow 0$ and become degenerate.

As we discuss in detail in Appendix~\ref{app:adiabatic}, these preliminary considerations ensure that the standard proof of the adiabatic theorem of quantum mechanics goes through for this system, even though the evolution operator $\mathbf{K}(\tau)$ is not Hermitian. As a result, away from level crossing and for smooth potentials, level transitions are suppressed. 

In our argument this is the only step where the assumption of real $\omega$ comes into play: the generalized adiabatic theorem, valid for $\omega$ real, allows us to single out level crossing as special points to focus on.

\subsection{A heuristic argument on level crossing and quasinormal modes}

The boundary conditions for the quasinormal modes~\eqref{eq:boundary_conditions} require a level transition from an eigenstate with eigenvalue $+\omega$ to one with with eigenvalue $-\omega$.
From the generalized adiabatic theorem of the previous section it follows that for well-behaved potentials, with slow variation and real eigenvalues, such a transition occurs efficiently only if a pair of eigenvalues $\lambda_{i,\pm}(\tau)$ (one positive and one negative) becomes degenerate, by touching or crossing $0$ at a given time $\hat{\tau}_i$. Thinking about a fixed value of $i$ for definiteness, the condition for this to happen is:
\begin{equation}
\lambda_{i,\pm}^{2}(\hat{\tau}_i)=0 \quad \implies \quad \omega^{2}-v_{i}(\hat{\tau}_i)=0.
\end{equation}
Starting from $\lambda_{i,\pm}^{2}(\tau)>0$ at $\tau\rightarrow -\infty$ the eigenvalues approach $0$ as $\tau \rightarrow \hat{\tau}_i$. Assuming the $v_{i}(\tau)$ is a smooth function of $\tau$ this can happen either with $v_{i}'(\hat{\tau}_i)=0$ or $v_{i}'(\hat{\tau}_i)\neq 0$. 

To gain some intuition we can consider the one-dimensional case $N=1$. In this case $v_{i}(\tau) \equiv V(\tau)$ and the condition $V'(\hat{\tau})=0$ corresponds to the well-know condition that the real part of the quasinormal frequencies $\omega^{2}$ is equal to the maximum of the potential, familiar from the approximation scheme of~\cite{Schutz:1985km}. Heuristically this can be justified by noticing that the quasinormal boundary conditions can be formulated requiring the outgoing waves to have equal amplitudes both at the horizon and at spatial infinity. In the analysis of~\cite{Schutz:1985km}, based on matching the ingoing and outgoing WKB asymptotics, this condition can not be accommodated if the frequency $\omega^2$ (or its real part) is smaller than the maximum of the potential. Indeed in this scenario there would be a region with exponential damping, making it impossible to satisfy the boundary conditions. 
We shall make the assumption that also in the multidimensional case, $N>1$, only the crossing points with $v_{i}'(\hat{\tau}_i)=0$ are relevant for the computation of the quasinormal frequencies.

For well-behaved physical potentials, the total number of real maxima in the physical coordinate region is expected to be $N$, the same as the dimension of the system of linear equations. In this case, every maximum $v_{i}(\hat{\tau}_i)$ with $i=1,\dots, N$ will give rise to a tower of quasinormal modes.

Up to now we have been assuming that $\omega$ is real and that the eigenvalues $v_{i}(\tau)$ are real. The former should be understood as an approximation, under the condition that the imaginary part of $\omega$ is much smaller than the real part. The latter condition can be relaxed in physically realistic examples, again on a heuristic basis.

\section{An approximation scheme for computing quasinormal frequencies}
\label{sec:approximation}

Having determined the interesting points $\hat{\tau}_i$ as the values of $\tau$ that correspond to stationary points of the eigenvalues of the potential matrix, defined by the condition $v_{i}'(\hat{\tau}_i)=0$, we shall focus on the dynamics of the system by going back to the original formulation~\eqref{eq:general_system} and expanding the potential in a power series around every maximum $v_{i}(\hat{\tau}_i)$. Moreover, for notational convenience we change independent variable from $\tau$ to $x=\tau-\hat{\tau}_i$, for every fixed value of $i$.\footnote{The variable $x$ is a continuous variable that implicitly depends on the (discrete) index $i$. We keep this dependence implicit for ease of notation.}

To illustrate the procedure let us focus on the case $N=2$ and choose $i=1$ for concreteness. Consider the Taylor expansion of the matrix $\mathbf{V}(x)$ around $x=0$, corresponding to a maximum of $v_{1}(x)$:
\begin{equation}
\mathbf{V}(x) = \sum_{k=0}^{\infty} \mathbf{V}^{(k)} x^{k}.
\end{equation} 
As a first step we bring the constant matrix $\mathbf{V}^{(0)}$ to diagonal form by performing a constant change of variables.\footnote{We diagonalize the matrix over the complex field. In principle it could be possible to encounter a matrix which is non-diagonalizable. However the set of $N\times N$ non-diagonalizable matrices has zero Lebesgue measure over $\mathbb{C}^{N\times N}$ so this is not expected to be a problem in practical applications, as we shall see in explicit examples.} In this basis, the potential takes now the form
\begin{equation}\label{eq:potential_expanded}
\mathbf{V}(x) = \begin{pmatrix}
a_{0}-a_{2} x^{2} + \dots & \hspace{5pt} b_{1} x + \dots\\
c_{1} x + \dots & d_{0} + \dots \\
\end{pmatrix},
\end{equation}
where the absence of the term $a_{1}$ follows from the fact that we are expanding around a local maximum of $v_{1}(x)$, while $b_{0}$ and $c_{0}$ have been reabsorbed by the constant diagonalization. The coefficients $d_1,b_2,c_2,\dots$ are in general all non-zero, but will give subleading corrections to the quasinormal frequencies, as we shall explain in the following. We shall denote the first diagonal component of $\mathbf{V}(x)$ in this particular basis as $V_{11}(x)$.

In a situation where the mixing terms $b_{k}, c_{k}$ (with $k\geq 1$) are small we can adopt a perturbative approach and introduce a formal counting parameter $\varepsilon$. This is the case, for instance, when they arise from a weakly coupled extension of GR, in which they are proportional to the coupling constants of the underlying theory. Notice, however, that the effect of $b_{0},c_{0}$ has been fully captured in our approach by diagonalizing the constant potential matrix at the maximum; such an effect is not necessarily small, so that our approach can allow to compute the quasinormal modes also in scenarios in which the mixing terms are of order one, but the residual coupling after diagonalization of the constant matrix are small as we shall see in the next section.

The leading order approximation for the tower of quasinormal frequencies associated with $v_{1}(\hat{\tau}_1)$, is obtained by neglecting the mixing terms and considering only the dynamics of $q_{1}(x)$, as governed by $V_{11}(x)$. At lowest order, the behaviour of the quasinormal modes can then be captured by the analysis of~\cite{Schutz:1985km}, considering only the quadratic approximation for the potential $V_{11}(x) \simeq a_{0} - a_{2} x^{2}$. Anharmonic corrections to $V_{11}(x)$ can be included by either performing a higher order WKB analysis of the asymptotics~\cite{Iyer:1986np,Matyjasek:2017psv,Konoplya:2019hlu} or by mapping the question to a bound state problem~\cite{Blome:1984,Ferrari:1984ozr,Ferrari:1984zz,Zaslavsky:1991ug,Hatsuda:2019eoj}. In order to include also the corrections deriving from the mixing terms it proves useful to adopt the latter approach, which allows both for a conceptually clear formulation of the computation and a straightforward method of calculation. We shall adopt a perturbative approach and compute separately corrections of two classes: mixing terms among $q_{1}$ and $q_{2}$, and anharmonic terms in the potential. 

\subsection{Mixing terms}
\label{sec:mixing}

Let us consider without loss of generality the system of equations obtained by expanding around the maximum of $v_{1}(x)$, and keep only terms that are formally of order $x^{2}$ or lower with respect to $V_{11}(x)$. That is: if we formally assign units to $x$, we only keep terms whose contribution to the tower of quasinormal frequencies associated with $V_{11}$ (as defined in the limit $\varepsilon\rightarrow 0$) is of the same order as the contribution generated by $a_2$. As it will become clear (see eq.~\eqref{eq:mixing_qnm}), mixing contributions of this order are generated only by $b_1,c_1,d_0$.
Introducing explicitly the counting parameter $\varepsilon$, we have the system
\begin{equation}\label{eq:quad_mix}
\dfrac{\rm d^{2}}{{\rm d}x^{2}} 
\begin{pmatrix}
q_{1} \\ q_{2} 
\end{pmatrix}
=
-\omega^{2}\begin{pmatrix}
q_{1} \\ q_{2} 
\end{pmatrix}+\begin{pmatrix}
a_{0}-a_{2} x^{2} & \hspace{5pt} \varepsilon b_{1} x \\
\varepsilon c_{1} x & d_{0} \\
\end{pmatrix}
\begin{pmatrix}
q_{1} \\ q_{2} 
\end{pmatrix},
\end{equation}
with boundary conditions such that $q_{i}$ is left-moving at the horizon and right-moving at infinity, as in~\eqref{eq:boundary_conditions}. Since we are studying the set of quasinormal modes associated with $v_{1}(\hat{\tau}_1)$, we are interested in the eigenmodes that reduce to the $q_{1}$ eigenfunction in the limit $\varepsilon\rightarrow 0$. To understand the asymptotic behaviour of the system we can use the ordinary WKB approximation for a system of coupled differential equations~\cite{Weinberg:1962eikonalmhd}, considering the limit $x\rightarrow \pm \infty$.\footnote{By ordinary WKB approximation, we refer to the method developed for a single time-independent Schr\"odinger equation with slowly varying potential by Wentzel, Kramers and Brillouin (see for instance~\cite{Sakurai:2011zz} for a textbook treatment), and its generalization to systems of equations developed by Weinberg~\cite{Weinberg:1962eikonalmhd}. This should not be confused with the WKB-inspired approach of Schutz and Will~\cite{Schutz:1985km}.} At leading order in the WKB expansion we look for solutions of the form
\begin{equation}
\begin{pmatrix}
q_{1} \\ q_{2} 
\end{pmatrix} =
\begin{pmatrix}
A \\ B
\end{pmatrix}
e^{\I \,S(x)},
\end{equation}
with $A=B=\rm const$ and $S''(x) \ll (S'(x))^{2}$. Neglecting terms proportional to $S''(x)$, our system of coupled differential equations becomes an algebraic system in $u\equiv S'(x)$:
\begin{equation}
\begin{pmatrix}
u^{2}+\omega^{2}+a_{0}-a_{2} x^{2} & \hspace{5pt} \varepsilon b_{1} x \\
\varepsilon c_{1} x & u^{2}+\omega^{2}+d_{0} \\
\end{pmatrix}
\begin{pmatrix}
A \\ B
\end{pmatrix}
= 0.
\end{equation}
The system admits non-trivial solutions when it becomes degenerate; the vanishing of the determinant gives the conditions
\begin{equation}
\begin{split}
&u_{\pm}^{2} = \dfrac{1}{2} \left(2\omega^{2}-a_{0}+a_{2}x^{2}-d_{0}\pm \mathcal{C}\right),\\
&\mathcal{C}=\sqrt{(2\omega^{2}-a_{0}+a_{2}x^{2}-d_{0})^{2}-4(\omega^{2}-a_{0}+a_{2}x^{2})(\omega^{2}-d_{0})-\varepsilon^{2} b_{1}c_{1} x^{2}}.
\end{split}
\end{equation}
In the limit $x\rightarrow \pm \infty $ the two WKB solutions then become:
\begin{equation}
\begin{split}
& u_{+} \sim \sqrt{a_{2}} x \quad \implies \quad \vec{q}_{+} \sim \begin{pmatrix}
A_{+} \\ B_{+}
\end{pmatrix} 
e^{\pm \I \, \int^{x} \sqrt{a_{2}}x {\rm d}x} \sim \begin{pmatrix}
A_{+} \\ B_{+}
\end{pmatrix} 
e^{\pm \I \, \frac{\sqrt{a_{2}}}{2}x^{2}} , \\
& u_{-} \sim {\rm const} \quad \implies \quad \vec{q}_{-} \sim \begin{pmatrix}
A_{-} \\ B_{-}
\end{pmatrix} 
e^{\pm \I \, ({\rm const}) \,x }.
\end{split}
\end{equation}
From the asymptotic analysis we learn that the eigenfunctions that satisfy the quasinormal mode boundary conditions and reduce to the $q_1$ eigenfunction in the limit $\varepsilon \rightarrow 0$ are those associated with $u_{+}$:
\begin{equation}
\vec{q}_{+} \sim \begin{pmatrix}
A_{+} \\ B_{+}
\end{pmatrix} 
\exp\left(+ \I \, \dfrac{\sqrt{a_{2}}}{2}x^{2}\right) \qquad \qquad x\rightarrow \pm \infty .
\end{equation}
Moreover, the asymptotic behaviour of the solutions suggests performing the analytic continuation
\begin{equation}\label{eq:analytic_continuation_1}
x \rightarrow e^{+\I \, \pi/4} z, \qquad x^{2} \rightarrow + \I \,z^{2}, \qquad \dfrac{\rm d}{{\rm d}x^{2}} \rightarrow -\I \, \dfrac{\rm d}{{\rm d}z^{2}},
\end{equation}
so that the eigenvalue problem~\eqref{eq:quad_mix} is mapped to a new eigenvalue problem with \emph{regular} boundary conditions:
\begin{equation}
\vec{q}(z) \rightarrow 0 \qquad z \rightarrow \pm \infty.
\end{equation}
We shall work under the assumption that the analytic continuation can be performed without encountering singularities in the complex plane, so that the procedure allows to recover the quasinormal modes from the eigenvalues of the associated bound state problem. A similar (but distinct) procedure has been discussed in the case of a single differential equation in Schr\"odinger-like form in Refs.~\cite{Blome:1984,Ferrari:1984ozr,Ferrari:1984zz,Zaslavsky:1991ug}.

Multiplying by $-\I /2$ and defining
\begin{equation}
\label{eq:redefinitions}
\nu \equiv \dfrac{\I}{2} \left(\omega^{2} -a_{0}\right), \qquad \tilde{b}_{1} \equiv \dfrac{e^{+\I \, 3\pi/4}}{2} b_{1}, \qquad \tilde{c}_{1} \equiv \dfrac{e^{+\I \, 3\pi/4}}{2} c_{1}, \qquad \tilde{d}_{0} \equiv \dfrac{\I}{2}(d_{0}-a_{0}),
\end{equation}
we obtain the system
\begin{equation}
\label{eq:schrodinger}
-\dfrac{1}{2} \dfrac{\rm d^{2}}{{\rm d}z^{2}} 
\begin{pmatrix}
q_{1} \\ q_{2} 
\end{pmatrix}
+
\begin{pmatrix}
\frac{1}{2} a_{2} z^{2} & \hspace{5pt} \varepsilon \tilde{b}_{1} z \\
\varepsilon \tilde{c}_{1} z & \tilde{d}_{0} \\
\end{pmatrix}
\begin{pmatrix}
q_{1} \\ q_{2} 
\end{pmatrix}
=
\nu\begin{pmatrix}
q_{1} \\ q_{2} 
\end{pmatrix}.
\end{equation}
Working perturbatively in $\varepsilon$ we first of all recover the leading order formula of Schutz and Will~\cite{Schutz:1985km}. Indeed, working at zeroth order in $\varepsilon$, the eigenstate with $q_{2}^{(0)}=0$ satisfies the Schr\"odinger equation and the boundary conditions for the bound states of a quantum mechanical harmonic oscillator.\footnote{From now on, a superscript $(k)$ denotes a quantity of $k$-th order in $\varepsilon$. Moreover we shall adopt the notation $q_{i,n}^{(0)}$, with $i=1,\dots,N$ and $n\in \mathbb{N}$, to denote the $i$-th component of the eigenvector associated with the $n$-th bound state of our eigenvalue problem.} The eigenfunctions and eigenvalues are then easily obtained:
\begin{equation}
\label{eq:leading_qnm_0}
\begin{split}
&\nu_{n}^{(0)} = \sqrt{a_{2}} \left(n+\dfrac{1}{2}\right) \\
& q_{1,n}^{(0)}(z) = Z_{n}\, H_{n}(\sqrt[4]{a_{2}}\, z) \exp\left(- \sqrt{a_{2}} z^{2}/2 \right), \qquad q_{2,n}^{(0)}(z)=0,
\end{split}
\end{equation}
where $H_{n}(\cdot)$ are Hermite polynomials and we normalize the eigenfunction to have unit $L^{2}$ norm, with
\begin{equation}
\label{eq:normalization}
Z_{n}= \dfrac{1}{\sqrt{2^{n} n!}} \left(\dfrac{\sqrt{a_{2}}}{\pi} \right)^{\frac{1}{4}}.
\end{equation}
Expressing the eigenvalues in terms of the original variables we recover the quantization condition for the quasinormal modes of~\cite{Schutz:1985km}, without performing a WKB matching analysis:
\begin{equation}\label{eq:leading_qnm}
\omega_{n}^{2} = a_{0} - \I \, \sqrt{4 a_{2}} \left(n+\dfrac{1}{2}\right).
\end{equation}
In order to compute the leading $\varepsilon$-corrections to the quantization condition we use the well-known Rayleigh--Schr\"odinger perturbation theory approach, familiar from the time-independent eigenvalue problems of ordinary quantum mechanics. Since the perturbation is off-diagonal, the first non-trivial correction arises only at second order in perturbation theory.

We denote by $\delta\mathcal{H}_{\varepsilon}$ the order $\varepsilon$ perturbation in equation~\eqref{eq:schrodinger}:
\begin{equation}
\delta\mathcal{H}_{\varepsilon} = 
\begin{pmatrix}
0 & \hspace{5pt} \varepsilon \tilde{b}_{1} z \\
\varepsilon \tilde{c}_{1} z & 0 \\
\end{pmatrix}.
\end{equation}
Going through the usual derivation, the order $\varepsilon^2$ correction to the eigenvalue $\nu_{n}$ can be expressed as 
\begin{equation}
\label{eq:second_pt}
\nu_{n}^{(2)} = \int_{-\infty}^{+\infty} {\rm d}z \left( \vec{q}_{n}^{\,(0)}(z) \right)^{\dagger} \delta\mathcal{H}_{\varepsilon}\, \vec{q}_{n}^{\,(1)}(z).
\end{equation}
In the ordinary treatment given in quantum mechanics textbooks it is customary to express $\vec{q}_{n}^{\,(1)}(z)$ as a linear superposition of zeroth-order eigenfunctions $\vec{q}_{m}^{\,(0)}(z)$. In our case it will be convenient to solve directly the differential equation for the first-order eigenfunction $\vec{q}_{n}^{\,(1)}(z)$. Since $q_{2,n}^{(0)}(z)=0$ and the perturbation $\delta\mathcal{H}_{\varepsilon}$ is off-diagonal we shall only need $q_{2,n}^{(1)}(z)$. Working at first order in $\varepsilon$, the differential equation for $q_{2,n}^{(1)}(z)$ is:
\begin{equation}
\left( \dfrac{\rm d^{2}}{{\rm d}z^{2}} - \alpha_{n}^{2} \right) q_{2,n}^{(1)}(z) = 2 \varepsilon \tilde{c}_{1} z\, q_{1,n}^{(0)}(z),
\end{equation}
where we have defined $\alpha_{n}$ so that its real part is positive and 
\begin{equation}
\alpha_{n}^{2} = 2 \left(\tilde{d}_{0}- \nu_{n}^{(0)} \right) = -\sqrt{4 a_{2}} \left(n+\dfrac{1}{2}\right)+\I (d_{0}-a_{0}).
\end{equation}
We can treat the term on the right hand side as a source term and solve the differential equation using the Green's function:
\begin{equation}
\left( \dfrac{\rm d^{2}}{{\rm d}z^{2}} - \alpha_{n}^{2} \right) G_{n}(z,z') = \delta(z-z') \quad \implies \quad q_{2,n}^{(1)}(z) = 2 \varepsilon \tilde{c}_{1} \int_{-\infty}^{+\infty} {\rm d}z' G_{n}(z,z') \, z'\, q_{1,n}^{(0)}(z').
\end{equation}
The Green's function for the differential operator $\left(\frac{\rm d^{2}}{{\rm d}z^{2}} - \alpha_{n}^{2}\right)$ satisfying the appropriate boundary conditions is given by:
\begin{equation}
G_{n}(z,z') = - \dfrac{1}{2\alpha_{n}} \left[ \Theta(z'-z) e^{-\alpha_{n} (z'-z)} + \Theta(z-z') e^{-\alpha_{n} (z-z')} \right],
\end{equation}
from which it follows that
\begin{equation}
q_{2,n}^{(1)}(z) = - \dfrac{\varepsilon \tilde{c}_{1}}{\alpha_{n}} \int_{0}^{+\infty} {\rm d}\zeta \left[ (z+ \zeta) q_{1,n}^{(0)}(z+ \zeta) + (z- \zeta) q_{1,n}^{(0)}(z - \zeta) \right] e^{-\alpha_{n}\zeta} .
\end{equation}
Plugging back in eq.~\eqref{eq:second_pt} and using eq.~\eqref{eq:leading_qnm_0}, after straightforward manipulations we obtain
\begin{equation}
\nu_{n}^{(2)}= - \varepsilon^{2} \dfrac{4 \sqrt{\pi} Z_{n}^{2} \tilde{b}_{1}\tilde{c}_{1}}{\alpha_{n} a_{2}} \int_{0}^{+\infty} P_{n}(\sigma) \exp\left(-2 \frac{\alpha_{n}}{\sqrt[4]{a_{2}}}\,\sigma-\sigma^{2}\right) {\rm d}\sigma ,
\end{equation}
where $P_{n}(\sigma)$ is a polynomial of degree $(2n+2)$ with only even terms, defined by the integral
\begin{equation}
\label{eq:polynomial_mix}
P_{n}(\sigma) = \dfrac{1}{\sqrt{\pi}} \int_{-\infty}^{+\infty} (\rho-\sigma)(\rho+\sigma) H_{n}(\rho-\sigma) H_{n}(\rho+\sigma) e^{-\rho^{2}} {\rm d}\rho .
\end{equation}
Using the relations~\eqref{eq:redefinitions} and~\eqref{eq:normalization} the $\varepsilon^{2}$ correction to the quasinormal modes can be expressed as
\begin{equation}
\label{eq:mixing_qnm}
\left(\delta\omega^{2}_{n}\right)^{(2)} = - \varepsilon^{2}\,\dfrac{b_{1}c_{1}}{a_{2}} F_{n}\left(\dfrac{\alpha_{n}}{\sqrt[4]{a_{2}}}\right),
\end{equation}
where the function $F_{n}(y)$ defined by
\begin{equation}
\label{eq:function_mix}
F_{n}(y)= \left(\dfrac{1}{2^{n-1}\cdot n!}\right) \dfrac{1}{y} \int_{0}^{+\infty} P_{n}(\sigma) e^{-\sigma^{2}} e^{-2 y \sigma} {\rm d}\sigma,
\end{equation}
is known analytically and tabulated for the first few values of $n$ in Table~\ref{tab:mixing} together with the polynomials $P_{n}(\sigma)$. An asymptotic approximation for $F_{n}(y)$ can be obtained in the limit $|y| \rightarrow \infty$, as detailed in Appendix~\ref{app:asymptotic}. Moreover, it is possible to check that $\vert F_{0}(y)\vert < 1 $ for $\mathfrak{Re}(y) >0$, so that the mixing correction to the fundamental mode is always bounded by $\vert \varepsilon^2 b_1 c_1 /a_2 \vert $ and the corresponding perturbative expansion is under control as long as this parameter is small. 
In particular, the approximation scheme holds also for cases where $\varepsilon\sim \mathcal{O}(1)$, provided that $\vert b_1 c_1 /a_2 \vert $ is small, which depends on the potential.

\begin{table}[t]
\centering
\begin{tabular}{c|c|c}
$n$ & $P_{n}(\sigma)$ & $F_{n}(y)$ \\
\cline{1-3} 
\rule{0pt}{4ex}$0$ & $\left(-\sigma^{2} + \frac{1}{2}\right)$ & 
$1-\sqrt{\pi} \,e^{y^{2}} y \, {\rm erfc}(y)$ \\
\rule{0pt}{4ex}$1$ & $\left(4\,\sigma^{4}-4\, \sigma^{2} + 3 \right)$ & 
$-2 y ^2+\frac{1}{y}2 \sqrt{\pi } e^{y ^2} \left(y ^2+1\right)^2 \text{erfc}(y )-3$ \\ 
\rule{0pt}{4ex}$2$ & $4\left(-4\,\sigma^{6}+10\,\sigma^{4}-10\, \sigma^{2} + 5 \right)$ & 
$2 y ^4+9 y ^2-\frac{1}{2} \sqrt{\pi } e^{y ^2} \left(2 y ^2+5\right)^2 y \, \text{erfc}(y )+9$ \\ 
\end{tabular}
\caption{\it Explicit form of the polynomials $P_{n}$ and functions $F_{n}$ defined respectively in eqs.~\eqref{eq:polynomial_mix} and~\eqref{eq:function_mix}, relevant for the computation of the mixing correction~\eqref{eq:mixing_qnm}. The complementary error function is defined as: ${\rm erfc}(y) \equiv 2\int_y^\infty e^{-u^2} {\rm d}u / \sqrt{\pi}$.}
\label{tab:mixing}
\end{table}

\subsection{Anharmonic corrections}
\label{sec:anharmonic}

Higher order corrections to the quadratic potential can be included systematically and the effect of these terms can be captured straightforwardly in a perturbative approach. In this spirit, in this section we shall consider only the anharmonic corrections to the leading result~\eqref{eq:leading_qnm}, neglecting the mixing terms analyzed in the previous section.
Following the procedure outlined at the beginning of Section~\ref{sec:approximation} we focus on the single differential equation 
\begin{equation}
\left( \dfrac{\rm d^{2}}{{\rm d}x^{2}} + \omega^{2} - V_{11}(x) \right) q_{1}(x) =0.
\end{equation}
We consider the Taylor expansion of $V_{11}(x)$ around $x=0$, using for convenience the following notation, consistent with the previous sections:
\begin{equation}
V_{11}(x)= a_{0}- \sum_{k=2}^{\infty} a_{k} x^{k}.
\end{equation}
Since we are expanding around a maximum, the coefficient $a_{2}$ is positive in our sign conventions.

The asymptotic WKB analysis of the previous section and the analytic continuation we performed relied on the quadratic form of the potential, so the latter is no longer warranted in this case. However, as recently pointed out by Hatsuda~\cite{Hatsuda:2019eoj}, it proves convenient to perform a different analytic continuation which allows us to include efficiently anharmonic corrections up to very high orders. Following~\cite{Hatsuda:2019eoj} we introduce a deformation parameter $\hbar$ such that:
\begin{equation}
\left( - \hbar^{2} \dfrac{\rm d^{2}}{{\rm d}x^{2}} + \omega^{2} - V_{11}(x) \right) q_{1}(x) =0.
\end{equation} 
For the value $\hbar=\I$ we recover our original problem, while $\hbar$ has \emph{formally} the same role as Planck's constant in the quantum mechanical Schr\"odinger equation for a particle in the inverted potential $-V_{11}(x)$. Defining $\tilde{x} = x/ \sqrt{\hbar}$, multiplying by $1/2$ and rearranging the terms, the equation becomes
\begin{equation}
\label{eq:boundstate_anharmonic}
\left(-\dfrac{1}{2} \dfrac{\rm d}{{\rm d}\tilde{x}^{2}} + \dfrac{a_{2}}{2} \tilde{x}^{2} + \dfrac{1}{2} \sum_{k=3}^{\infty} \hbar^{k/2-1} a_{k} \, \tilde{x}^{k} \right) q_{1}(\tilde{x}) = \left( \dfrac{-\omega^{2}+a_{0}}{2 \hbar} \right) q_{1}(\tilde{x}),
\end{equation}
which makes it manifest that $\hbar$ acts as a loop counting parameter for anharmonic interactions.
The quasinormal modes are obtained as the analytic continuation at $\hbar= \I$ of the eigenvalues for the bound state problem~\eqref{eq:boundstate_anharmonic}.\footnote{It is easy to check that for the quadratic potential of Section~\ref{sec:mixing} the quantization conditions derived with this analytic continuation are equivalent to those obtained with the analytic continuation of eq.~\eqref{eq:analytic_continuation_1}.}

By dimensional analysis, the anharmonic corrections to the leading order eigenvalues will be a function of the dimensionless coupling constants
\begin{equation}
\mathfrak{a}_{k}=\dfrac{a_{k}}{(a_{2})^{(k+2)/4}}.
\end{equation}
We shall assume here that the dimensionless coefficients are small so that the perturbative expansion is under control, which is a valid assumption in the explicit examples we consider.

A straightforward method to compute corrections to the eigenvalues from anharmonic terms would be to use Rayleigh--Schr\"odinger perturbation theory. A more efficient approach was devised long ago by Bender and Wu~\cite{Bender:1969si}, as recently emphasized in this context in Ref.~\cite{Hatsuda:2019eoj}. This amounts to a systematic expansion in $\hbar$ --- or equivalently in the number of loops --- and reproduces the quantization conditions obtained through the approach of~\cite{Schutz:1985km,Iyer:1986np}, which is also an expansion in $\hbar$ (although with a different formulation). The general formulation for a bound state problem of the form~\eqref{eq:boundstate_anharmonic} and the resulting recurrence relations have been discussed in detail in Ref.~\cite{Sulejmanpasic:2016fwr}. The 1-loop quantization condition, corresponding to a second order WKB analysis, is given by~\cite{Iyer:1986np,Hatsuda:2019eoj}
\begin{equation}\label{eq:oneloop_qnm}
\nu_{n}^{(\rm 1-loop)} =\left(\delta\omega_{n}^{2}\right)^{(\rm 1-loop)} = -\dfrac{a_{3}^{2}}{32\, a_{2}^{2}}\left[7+ 60 \left(n+\dfrac{1}{2}\right)^{2}\right] + \dfrac{3\,a_{4}}{8\,a_{2}}\left[1+ 4 \left(n+\dfrac{1}{2}\right)^{2}\right].
\end{equation}
Comparing with the leading order quantization~\eqref{eq:leading_qnm_0} we see that the relative correction $\nu_{n}^{(\rm 1-loop)} / \nu_{n}^{(0)} $ depends on the expansion parameters $(\mathfrak{a}_{3},\mathfrak{a}_{4})$ as expected. We also notice that the relative correction grows linearly with $n$ in the limit of large $n$, so that the approximation is under control only for $n$ small, in a range controlled by the dimensionless couplings $(\mathfrak{a}_{3},\mathfrak{a}_{4})$. Corrections from quintic or higher interactions enter only at higher loop (or WKB) order.

\section{Explicit example}\label{sec:example}

In general, the linearized dynamics of the perturbations around a given black hole solution, described by the system \eqref{eq:general_system}, cannot be recast in the form of a set of fully decoupled Schr\"odinger-like equations. This is the case, for instance, of the perturbations of charged Kerr--Newman black holes (see for instance~\cite{Chandrasekhar:1985kt}, section 111), of massive and partially massless spinning fields on Schwarzschild or Kerr spacetimes \cite{Brito:2013yxa,Brito:2013wya,Rosen:2020crj},\footnote{Partially massless fields are special irreducible representations of massive spinning particles on Einstein spacetimes
\cite{Deser:1983tm,Deser:1983mm,Higuchi:1986py}
(the case of more general spacetimes has been discussed e.g.~in \cite{Bernard:2017tcg}). They exist for particles with spin greater than 1 when their masses take very particular values. Their main property is that they carry one degree of freedom less than generic massive fields with the same spin, thanks to a residual gauge symmetry that is responsible for removing the helicity-zero component.} and of perturbations of hairy black holes beyond GR, such as in scalar-tensor theories where the scalar has a nontrivial background profile (see, e.g., Refs.~\cite{Blazquez-Salcedo:2017txk,Kobayashi:2014wsa,Franciolini:2018uyq, Franciolini:2018aad,Langlois:2021aji,Hui:2021cpm}).\footnote{If the scalar background is constant or, in particular, vanishing, it is always possible to decouple the equations for the perturbations \cite{Tattersall:2017erk}. Another example where it is possible to diagonalize the potential matrix $\mathbf{V}$ in \eqref{eq:general_system} through a coordinate independent linear transformation is given by charged dilaton black holes \cite{Holzhey:1991bx,Ferrari:2000ep}.} Additional examples of coupled systems of equations arise in the study of the hydrodynamics of strongly-coupled gauge theories through the gauge/gravity duality, see for instance~\cite{Benincasa:2005iv,Son:2006em}.

In the following, we shall consider an explicit example for the potential matrix $\mathbf{V}$ in \eqref{eq:general_system}, and we shall use our analytical approach to compute explicitly the quasinormal modes, and compare them against the values obtained with numerical methods. We stress that the choice of example is non-generic and rather peculiar, and is selected to push the approximation to its extreme condition. It does not correspond to the typical situation, but should be instead interpreted as a worst case scenario. This explicit system is simple enough that it allows analytic control, yet, as we shall discuss, it reduces to well-known results in the $\varepsilon=1$ limit, allowing us to explore the regime of large $\varepsilon$ and to cross-check our computations with previous results in the literature.\footnote{ To test our method on explicit examples of scalar-tensor theories, like that of a scalar-Gauss--Bonnet model, we would need the explicit solution of the background. However, this is known analytically only perturbatively in the coupling; as a consequence, it would allow us to test only the small $\varepsilon$ regime.} In addition, it is an instructive example which allows us to discuss possible issues associated with the method in the case in which the potential is not in a hermitian form.

The toy model we shall focus on is
\begin{equation}
\mathbf{V} (r) = f(r) \begin{pmatrix}
-\frac{8r_s}{r^3} + \frac{\ell^2+\ell+4}{r^2} & \frac{\varepsilon\left(\ell^2+\ell-2\right) (2 r-3r_s)}{r^3}
\\
 \frac{2\varepsilon}{r^2} & \frac{r_s}{r^3} + \frac{\ell^2+\ell-2}{r^2}
\end{pmatrix} \, ,
\qquad
f(r) \equiv 1- \frac{r_s}{r} \, ,
\label{Vtoy}
\end{equation}
where the radial coordinate $r$ is related to the tortoise variable $r_\star$ in \eqref{eq:general_system} through the relation $\D r_\star/\D r=1/f(r)$. It can be checked that this potential satisfies the adiabatic condition of Appendix~\ref{app:adiabatic}, away from level crossing.
$\varepsilon$ is a control parameter that we shall vary within the interval $[0,1]$. This will allow us to explore different cases, from a regime of perturbative mixing ($\varepsilon\ll1$) to a regime where the coupling is order one at $r\sim r_s$ ($\varepsilon\sim 1$). 
Note that, when $\varepsilon=1$, eq.~\eqref{Vtoy} recovers the potential of parity-odd partially massless spin-$2$ fields on a Schwarzschild-de Sitter spacetime in the limit of zero cosmological constant (see, e.g., eq.~(4.13) of Ref.~\cite{Rosen:2020crj}). In this limit, the quasinormal modes are expected to recover the odd spectra of massless spin-$1$ and spin-$2$ fields, as we shall check explicitly for various values of $\ell$. The toy example~\eqref{Vtoy} will also give us the opportunity to discuss possible subtleties and limitations of our approach. 

Our results are summarized in Tables~\ref{tab:qnms1} and \ref{tab:qnms2}, where we compare the quasinormal modes obtained from our analytic method, including both mixing and anharmonic corrections, with the ones computed numerically.\footnote{For the numerical values we used the direct integration method, which consists in solving the system of equations numerically from each of the boundaries and choosing the frequencies to match the left and right solutions (see for example~\cite{Pani:2013pma}).} 

\begin{table}[t]
\vspace{30pt}
\centering
\begin{tabular}{lccccccc}
\firsthline
& \multicolumn{3}{c}{ $\text{Re}(\omega r_s)$ } & & \multicolumn{3}{c}{$-\text{Im}(\omega r_s)$ } \\
\cline{2-4}\cline{6-8}
$\varepsilon$ & Analytic & Numerical & \% & & Analytic & Numerical & \% \\
\hline
0.1 & 0.847 & 0.849 & 0.2 & & 0.183 & 0.176 & 4.0 \\
0.2 & 0.852 & 0.854 & 0.2 & & 0.184 & 0.176 & 4.5 \\
0.3 & 0.861 & 0.861 & $<0.1$ & & 0.185 & 0.175 & 5.7 \\
0.4 & 0.871 & 0.868 & 0.3 & & 0.186 & 0.175 & 6.3 \\
0.5 & 0.882 & 0.876 & 0.7 & & 0.188 & 0.175 & 7.4 \\
0.6 & 0.893 & 0.883 & 1.1 & & 0.190 & 0.176 & 8.0 \\
0.7& 0.905 & 0.891 & 1.6 & & 0.192 & 0.178 & 7.9 \\
0.8& 0.917 & 0.898 & 2.1 & & 0.194 & 0.181 & 7.2 \\
0.9& 0.929 & 0.906 & 2.5 & & 0.195 & 0.185 & 5.4 \\
1.0& 0.941 & 0.915 & 2.8 & & 0.196 & 0.190 & 3.2 \\
\lasthline
\end{tabular}
\caption{\it Comparison between the analytic estimate and the numerical computation for the first set of QNMs for $(n=0,\ell=2)$ perturbations, in the toy model described by the potential in eq.~\eqref{Vtoy}. We include both the mixing corrections of Sec.~\ref{sec:mixing} and the anharmonic (one-loop) corrections of Sec.~\ref{sec:anharmonic}. This set of quasinormal modes corresponds to the local maximum located in the region $r/r_s > 3/2$, where the eigenvalues of $\mathbf{V}(r)$ are real and the approximation scheme we use is justified. }
\label{tab:qnms1}
\end{table}

\begin{table}[t]
\centering
\begin{tabular}{lccccccc}
\firsthline
& \multicolumn{3}{c}{ $\text{Re}(\omega r_s)$ } & & \multicolumn{3}{c}{$-\text{Im}(\omega r_s)$ } \\
\cline{2-4}\cline{6-8}
$\varepsilon$ & Analytic & Numerical & \% & & Analytic & Numerical & \% \\
\hline
0.1 & 0.801 & 0.797 & 0.5 & & 0.213 & 0.194 & 9.8 \\
0.2 & 0.799 & 0.791 & 1.0 & & 0.260 & 0.194 & 35 \\
0.3 & 0.789 & 0.783 & 0.8 & & 0.349 & 0.195 & 79 \\
0.4 & 0.734 & 0.774 & 5.2 & & 0.431 & 0.195 & 121 \\
0.5 & 0.463 & 0.766 & 40 & & 0.428 & 0.194 & 121 \\
0.6 & & 0.758 & & & & 0.194 \\
0.7& & 0.752 & & & & 0.192 \\
0.8& & 0.748 & & & & 0.189 \\
0.9& & 0.746 & & & & 0.184 \\
1.0& & 0.747 & & & & 0.178 \\
\lasthline
\end{tabular}
\caption{\it Comparison between the analytic estimate and the numerical computation for the second set of QNMs for $(n=0,\ell=2)$ perturbations, in the toy model described by the potential in eq.~\eqref{Vtoy}. We include both the mixing corrections of Sec.~\ref{sec:mixing} and the anharmonic (one-loop) corrections of Sec.~\ref{sec:anharmonic}. This set of quasinormal modes corresponds to the local maximum located in the region $1< r/r_s < 3/2$, where the eigenvalues of $\mathbf{V}(r)$ are complex close to level crossing or for large $\varepsilon$. The approximation scheme is not justified and the results for the imaginary part of the quasinormal frequencies are off for sizeable mixing parameter $\varepsilon$, especially for the imaginary part.
}
\label{tab:qnms2}
\end{table}

\begin{table}[t]
\vspace{50pt}
\centering
\begin{tabular}{lccccccc}
\firsthline
& \multicolumn{3}{c}{ $\text{Re}(\omega r_s)$ } & & \multicolumn{3}{c}{$-\text{Im}(\omega r_s)$ } \\
\cline{2-4}\cline{6-8}
$\ell$ & Analytic & Numerical & \% & & Analytic & Numerical & \% \\
\hline
2 & 0.941 & 0.915 & 2.8 & & 0.196 & 0.190 & 3.2 \\ 
3 & 1.353 & 1.314 & 3.0 & & 0.203 & 0.191 & 6.2 \\ 
4 & 1.751 & 1.706 & 2.6 & & 0.206 & 0.192 & 7.4 \\ 
5 & 2.143 & 2.096 & 2.3 & & 0.209 & 0.192 & 8.9 \\ 
6 & 2.532 & 2.484 & 1.9 & & 0.210 & 0.192 & 9.3 \\ 
7 & 2.920 & 2.871 & 1.7 & & 0.211 & 0.192 & 9.8 \\ 
8 & 3.307 & 3.258 & 1.5 & & 0.212 & 0.192 & 10.3 \\ 
9 & 3.693 & 3.644 & 1.3 & & 0.212 & 0.192 & 10.3 \\ 
10 & 4.079 & 4.030 & 1.2 & & 0.213 & 0.192 & 10.8 \\ 
11 & 4.465 & 4.416 & 1.1 & & 0.213 & 0.192 & 10.7 \\ 
12 & 4.850 & 4.802 & 1.0 & & 0.213 & 0.192 & 10.7 \\ 
13 & 5.235 & 5.188 & 0.9 & & 0.214 & 0.192 & 11.2 \\ 
14 & 5.620 & 5.573 & 0.8 & & 0.214 & 0.192 & 11.2 \\ 
15 & 6.005 & 5.958 & 0.8 & & 0.214 & 0.192 & 11.2 \\ 
16 & 6.390 & 6.344 & 0.7 & & 0.214 & 0.192 & 11.2 \\ 
17 & 6.775 & 6.729 & 0.7 & & 0.214 & 0.192 & 11.2 \\ 
18 & 7.160 & 7.114 & 0.6 & & 0.214 & 0.192 & 11.2 \\ 
19 & 7.545 & 7.500 & 0.6 & & 0.214 & 0.192 & 11.2 \\ 
20 & 7.930 & 7.885 & 0.6 & & 0.214 & 0.192 & 11.2 \\ 
21 & 8.315 & 8.270 & 0.5 & & 0.214 & 0.192 & 11.2 \\ 
22 & 8.699 & 8.655 & 0.5 & & 0.214 & 0.192 & 11.2 \\ 
23 & 9.084 & 9.040 & 0.5 & & 0.214 & 0.192 & 11.2 \\ 
24 & 9.469 & 9.425 & 0.5 & & 0.214 & 0.192 & 11.2 \\ 
25 & 9.854 & 9.810 & 0.4 & & 0.214 & 0.192 & 11.2 \\ 
26 & 10.238 & 10.195 & 0.4 & & 0.214 & 0.192 & 11.2 \\ 
27 & 10.623 & 10.580 & 0.4 & & 0.214 & 0.192 & 11.2 \\ 
28 & 11.008 & 10.966 & 0.4 & & 0.214 & 0.192 & 11.2 \\ 
29 & 11.393 & 11.351 & 0.4 & & 0.214 & 0.192 & 11.2 \\ 
30 & 11.777 & 11.736 & 0.4 & & 0.214 & 0.192 & 11.2 \\ 
\end{tabular}
\caption{\it Comparison between the analytic estimate and the numerical computation for the fundamental QNM $n=0$ of a spin-1 field on a Schwarzschild spacetime computed from the potential in eq.~\eqref{Vtoy} with $\varepsilon=1$, as a function of $\ell$. We include both the mixing corrections of Sec.~\ref{sec:mixing} and the anharmonic (one-loop) corrections of Sec.~\ref{sec:anharmonic}. The numerical values are from Ref.~\cite{2006MPLA...21.2671P}. }
\label{tab:qnmseikonal}
\end{table}

Some comments are in order here. Let us start by looking at Table~\ref{tab:qnms1}, where we report the first set of frequencies with $n=0$ and $\ell=2$. Note that, as anticipated above, when $\varepsilon=1$ we recover the quasinormal modes of spin-1 fields on a Schwarzschild spacetime \cite{Brito:2013yxa,Rosen:2020crj,2006MPLA...21.2671P}. The precision of the analytic estimates is quite remarkable, with an error less than a few percent in most cases and $\lesssim \mathcal{O}(0.1\%)$ for the real part of $\omega$ for small values of $\varepsilon$. The percent error with respect to the numerical values can be further reduced by including higher orders corrections. The case of the second set of modes, reported in Table~\ref{tab:qnms2}, is instead slightly different. As it can be seen from the table, the error with respect to the numerical values becomes large as $\varepsilon$ increases, signaling a breakdown of the analytic approximation. The reason for this is that, in the particular example under consideration, the eigenvalues of the potential \eqref{Vtoy} acquire a nonzero imaginary part on a finite range of $r$. For small values of $\varepsilon$, this range is small and the local maxima are located in the region where the potential eigenvalues are real. The approximation is therefore well defined and provides an accurate estimate of both the two sets of quasinormal frequencies for small $\varepsilon$. However, as $\varepsilon$ increases, the maximum of one of the two eigenvalues of $\mathbf{V}(r)$ falls close to the boundary of the region where the eigenvalue is complex, approaching the edge until it is lost. The approximation scheme applied to this eigenvalue is no longer justified for sizeable values of $\varepsilon$ and cannot be used to estimate the corresponding set of quasinormal normal modes.\footnote{The maximum of the other eigenvalue falls instead always within the interval where the eigenvalue is real, for all values of $\varepsilon\in [0,1]$. This is why all the quasinormal modes in Table~\ref{tab:qnms1} can be reliably estimated, and with very good accuracy, using our approximation scheme. } This explains why the frequencies in Table~\ref{tab:qnms2} are significantly off for sizeable mixing parameter $\varepsilon$, especially for the imaginary part.

We should stress again that what happens to the modes in Table~\ref{tab:qnms2} is \textit{not} the typical situation. In fact, as we show explicitly in Appendix~\ref{app:Schreqs}, when an action formulation is available it is in general possible to recast the $N$ coupled equations for the perturbations in a Schr\"odinger-like form~\eqref{eq:general_system} with hermitian potential. As a result, the eigenvalues of the new $\mathbf{V} (r)$ are real and the breakdown shown in Table~\ref{tab:qnms2} does not happen.
In particular, if $N=2$ and the potential is frequency independent, the required transformation can be always found analytically and the hermitian potential can be written in closed form.\footnote{This is in agreement with the results of Ref.~\cite{Rosen:2020crj} (see in particular Appendix D), where this is shown explicitly for the particular case of the system \eqref{Vtoy} with $\varepsilon=1$.}
However, we find the specific example \eqref{Vtoy} still particularly instructive. Even if not generic, it is useful to show what the limitations of our analytical approximation scheme might be when a closed-form expression for the hermitian potential is not available, or when an action formulation is not available and our proof in Appendix~\ref{app:Schreqs} does not apply. For a generic two by two potential matrix
\begin{equation}
\mathbf{V} = \begin{pmatrix}
a & \hspace{5pt} b \\
c & d \\
\end{pmatrix},
\end{equation}
it is easy to show that the eigenvalues are always real when the off-diagonal elements have the same sign, and can develop an imaginary part only when $-  b\cdot c  > (a-d)^2 /8$. This explains the properties previously emphasized for the QNMs of our toy example~\eqref{Vtoy}, since the quantity $b\cdot c$ changes sign for $r/r_s < 3/2$ and its size is controlled by $\varepsilon$, elucidating the results of Tables~\ref{tab:qnms1} and~\ref{tab:qnms2}.

In Table~\ref{tab:qnmseikonal} we compare, as a function of $\ell$ and for $\varepsilon=1$, the analytic computation for the $n=0$ quasinormal mode associated with the local maximum located in the region $r/r_s > 3/2$, with the numerical value of the $n=0$ quasinormal frequency of a spin-1 field on a Schwarzschild spacetime \cite{2006MPLA...21.2671P}. As discussed previously, in this limit the system~\eqref{Vtoy} recovers the odd QNM spectrum of massless spin-1 and spin-2 fields on a Schwarzschild background and the spin-1 field is the one associated with the set of modes for which our approximation scheme is valid. The accuracy for the real part is remarkable and reaches the sub-percent level in the eikonal limit, whereas for the imaginary part it asymptotes to around $10\%$, signaling a systematic shift probably induced by the approximate treatment of the off-diagonal contributions, which are sizeable for $\varepsilon=1$.
Note that this can well be a feature of the field basis chosen to write the potential. In some cases convergence can be improved by writing the potential in a hermitian form, using the general procedure outlined in Appendix~\ref{app:Schreqs} below. An explicit example of this is precisely given by partially massless spin-2 fields on Schwarzschild-de Sitter spacetime: in \cite{Rosen:2020crj}, it is shown that, in the basis where the potential is hermitian, the off-diagonal mixing terms go as $\sim 1/\ell$ in the large-$\ell$ limit. We thus expect in this case a better precision of the estimate of the imaginary part of the frequencies in the eikonal limit. 

We conclude mentioning that a different approach to estimate the QNMs in the eikonal limit for a system of coupled linear equations has been discussed in Refs.~\cite{Glampedakis:2019dqh,Silva:2019scu,Bryant:2021xdh}. We stress though that our method is more general as it applies, under the assumptions discussed above, to any value of $\ell$.


\section{Discussion}
\label{sec:discussion}

In this work we have introduced a new analytic approximation scheme to estimate the QNM frequencies for systems of coupled linear differential equations of second order in Schr\"odinger-like form.
Examples where coupled equations arise include massive and partially massless spinning fields on Schwarzschild-(anti-)de Sitter spacetimes, and perturbations around black holes beyond GR, e.g.~in the context of scalar-tensor theories. Moreover, as proved in Appendix~\ref{app:Schreqs}, whenever an action formulation is available the system of equations can be put in Schr\"odinger-like form with a hermitian potential.

The approximation scheme is always under control when the mixing corrections are perturbative and small. However, we stress that our method can be applied equally well to cases where the mixing is large and provides up to order-one corrections close to the black hole, as long as an emergent perturbativity condition for the expansion of the potential around a critical point is satisfied, as discussed at the end of Section~\ref{sec:mixing}. Another main advantage of this approach with respect, for instance, to more phenomenological parameterizations of the ringdown in theories beyond GR (see e.g.~Ref.~\cite{McManus:2019ulj}) is that it allows to extract the quasinormal frequencies from the shape of the potential matrix around the maxima of its eigenvalues. As a consequence, with a finite experimental accuracy, the knowledge of the full shape of the potential is not needed to make accurate predictions.\footnote{This is in line with the results  of Ref.~\cite{Volkel:2022khh}, where it is shown that, under the assumption that corrections to the general relativistic potential are captured by a (finite) power series in $r_s/r$, only the form of the potential near the light ring can be robustly constrained via reconstruction  techniques from a modified spectrum of QNMs.} In addition, the potential does not need to be expressed as a series of (inverse) power corrections to the GR potential, but can have an arbitrary functional form.
The only assumptions are that $\mathbf{V}(r_\star)$ in \eqref{eq:general_system} is a smooth function of the radial tortoise coordinate $r_\star$, satisfying the fall-off condition~\eqref{eq:fall_off} and having real eigenvalues.
In fact, our method is even more general as it 
applies even to cases where the potential depends also generically on the frequency, $\mathbf{V}\equiv \mathbf{V}(r_\star,\omega)$, and has real eigenvalues for real values of $\omega$ --- see Refs.~\cite{Rosen:2020crj,Hui:2021cpm} for some explicit examples. In such cases, one can formally go through the same procedure described in Section~\ref{sec:approximation}, regarding the various quantities --- such as for instance the potential eigenvalues --- as functions of the frequency, and at the end solve self-consistently for the quasinormal modes. 

The main results of our analysis are the leading order quantization condition~\eqref{eq:leading_qnm}, the quadratic mixing correction~\eqref{eq:mixing_qnm} and the one-loop anharmonic correction~\eqref{eq:oneloop_qnm}. The analysis of anharmonic corrections could be easily extended at higher loop orders using the methods of~\cite{Bender:1969si,Sulejmanpasic:2016fwr}. It would be interesting to extend these methods to the analysis of a system of equations and investigate the Borel summability of the perturbative series along the lines of~\cite{Hatsuda:2019eoj}.

Our method is well justified when the eigenvalues of the potential matrix are real, and when the imaginary part of the frequency $\omega$ is small compared to its real part. The first condition can be always accommodated (for real values of the frequency), as long as the potential matrix is chosen to be hermitian, which can be always done when an action formulation is available, as previously discussed. In order to validate our approach and show its limitations we considered an example modeled on the case of partially massless spin-$2$ fields on a Schwarzschild-de Sitter spacetime in the limit of zero cosmological constant studied in Ref.~\cite{Rosen:2020crj}, using a non-hermitian potential. The accuracy, when the approximation is justified, is remarkable and it is expected to improve when using the hermitian formulation. 

The fact that our approximation scheme does not depend on the full shape of the potential away from the maxima of the eigenvalues makes it particularly suitable when combined with an effective field theory approach \cite{Franciolini:2018uyq,Hui:2021cpm}. It would be interesting to generalize the light-ring expansion introduced in \cite{Franciolini:2018uyq} for a single equation to coupled systems, and understand more systematically how to connect possible deviations in the observed ringdown frequencies to the effective couplings using our approximation scheme. We leave this and related aspects for future work.

\acknowledgments We thank Riccardo Penco for discussions at the early stages of this work. AP acknowledges support from the Simons Foundation Award No. 658906 and the grant DOE DE-SC0011941. ET is partly supported by the Italian MIUR under contract 2017FMJFMW (PRIN2017). LH acknowledges support by the DOE DE-SC0011941 and a Simons Fellowship in Theoretical Physics.

\appendix

\section{Perturbation equations in Schr\"odinger-like form}
\label{app:Schreqs}
Let us consider the following quadratic action for a $N$-dimensional field vector $\vec{\phi}$, describing the perturbations around a given stationary background:
\begin{equation}
S= \sum_{\ell m}\int \frac{\D \omega}{2\pi}\int \D r_\star \left[
-A_{ij} \partial_{r_\star}\hat \phi_i \partial_{r_\star}\hat \phi_j^* + \frac{1}{2} \left( B_{ij}  \partial_{r_\star}  \hat \phi_i  \, \hat \phi_j^* - B_{ij}\hat\phi_i  \partial_{r_\star}\hat \phi_j^*  \right) +C_{ij}\hat\phi_i \hat \phi_j^*
\right] \, ,
\label{generalEFT}
\end{equation}
where we Fourier transformed in time and decomposed the field in spherical harmonics as $\vec{\phi}(t,r_\star,\theta,\varphi)= \sum_{\ell m}\int \frac{\D \omega}{2\pi} e^{-\I\omega t}Y_{\ell m}(\theta,\varphi)\vec{\hat{\phi}}_{\ell,m}(\omega,r_\star)$,\footnote{In \eqref{generalEFT} and below, we shall drop for ease of notation the ${\ell,m}$ indices on the Fourier transformed fields $\hat \phi_i$. For simplicity, we also assume spherical symmetry for the background: this allows to decouple the quadratic Lagrangians of modes corresponding to different $\ell$'s. The analysis is valid, with minor modifications, also in the case of slowly rotating backgrounds, as in~\cite{Hui:2021cpm}.}
and where $\mathbf{A}$ and $\mathbf{C}$ are generic hermitian matrices, while $\mathbf{B}$ is anti-hermitian. Note that the quadratic actions of massive and partially massless spinning fields on Schwarzschild spacetimes (see, e.g., Appendix D of Ref.~\cite{Rosen:2020crj}), as well as the effective action of perturbations around black hole solutions in scalar-tensor theories (see, e.g., Refs.~\cite{Franciolini:2018uyq,Hui:2021cpm}) can all in general be put in the form \eqref{generalEFT}, after making an appropriate choice of gauge and solving for the constraints.
In \eqref{generalEFT}, $\mathbf{A}$, $\mathbf{B}$ and $\mathbf{C}$ are generic functions of the radial (tortoise) coordinate $r_\star$, while $\mathbf{B}$ and $\mathbf{C}$ may also depend on the frequency $\omega$ (as well as on the spherical harmonics quantum numbers).

First of all we note that the absence of gradient instabilities requires $A_{ij}$ to be also positive definite (in addition to being hermitian). It is thus possible to diagonalize the kinetic term $A_{ij} \partial_{r_\star} \hat \phi_i \partial_{r_\star}\hat \phi_j $ in the action \eqref{generalEFT}, via an $r_\star$-dependent similarity transformation $\vec{\hat\phi}\rightarrow \mathbf{N}(r_{\star}) \cdot \vec{\hat\phi}$. Moreover, by canonically normalizing the new field variables we can bring the matrix to the identity: $A_{ij}\rightarrow \delta_{ij}$.
As a result, the action can be recast as in \eqref{generalEFT} with $A_{ij}\equiv \delta_{ij}$, and where, up to integrations by parts, $\mathbf{B}$ and $\mathbf{C}$ can be chosen to be respectively anti-hermitian and hermitian.

The equations of motion take the following general form:
\begin{equation}
\left( \delta_{ij} \partial_{r_\star}^2 + \mathbf{B}_{ij}\partial_{r_\star} + \tilde{\mathbf{C}}_{ij}\right) \hat\phi_j =0 \, ,
\label{eomgen}
\end{equation}
where $\tilde{\mathbf{C}} = \frac{1}{2}\partial_{r_\star}\mathbf{B}+\mathbf{C}$. From the symmetry properties of $\mathbf{B}$ and $\mathbf{C}$, it follows that $\tilde{\mathbf{C}}$ will be in general neither hermitian nor anti-hermitian.
We now introduce the matrix
\begin{equation}
\mathbf{M}(u) = \mathcal{P} \left\{e^{-\frac{1}{2}\int^{u}_{\bar u} \mathbf{B}(r)\D r} \right\}\, ,
\label{eq:solMmatrix}
\end{equation}
where we used the path-ordered exponential $\mathcal{P}\{\exp(\cdot)\}$ defined by
\begin{equation}
\mathcal{P} \left\{e^{-\frac{1}{2}\int^{u}_{\bar u} \mathbf{B}(r)\D r} \right\} 
\equiv \sum_{n=0}^\infty \frac{(-1)^n}{2^n}\int^{u}_{\bar u} \D r_n \int^{r_n}_{\bar u} \D r_{n-1}\cdots \int^{r_2}_{\bar u} \D r_1 \, 
\mathbf{B}(r_n)\mathbf{B}(r_{n-1})\cdots \mathbf{B}(r_1) \, .
\end{equation}
Thanks to the anti-hermiticity of $\mathbf{B}$, the matrix $\mathbf{M}(u)$ can be shown to be unitary. To see this, consider first the conjugate matrix $\mathbf{M}^\dagger(u)$. From the definition in terms of power series, the properties of the transpose operator and the anti-hermiticity of $\mathbf{B}$, it follows that 
\begin{equation}
\mathbf{M}^\dagger(u) = \bar{\mathcal{P}} \left\{e^{\frac{1}{2}\int^{u}_{\bar u} \mathbf{B}(r)\D r} \right\}\, ,
\label{eq:Mtranspose}
\end{equation}
where $\bar{\mathcal{P}}$ denotes \emph{anti}-path-ordering.
Consider now the matrix $\mathbf{Z}(u)=\mathbf{M}(u)\mathbf{M}^\dagger(u)$. By differentiating with respect to $u$ we find that it satisfies the differential equation $\mathbf{Z}'(u) = -\frac{1}{2} \left(\mathbf{B}(u)\mathbf{Z}(u)-\mathbf{Z}(u)\mathbf{B}(u)\right)$, and by construction $Z_{ij}(\bar{u})=\delta_{ij}$. It is easy to check that $Z_{ij}(u)=\delta_{ij}$ for every $u$ solves the differential equation and satisfies the initial condition, so that by the Cauchy-Lipschitz-Picard uniqueness theorem it follows that $\mathbf{Z}(u)=\mathbf{1}$ and $\mathbf{M}^\dagger(u)= \mathbf{M}^{-1}(u)$, \emph{i.e.} $\mathbf{M}$ is unitary.

Performing the field redefinition
\begin{equation}
\vec{\hat\phi} = \mathbf{M}(r_\star) \cdot \vec{\psi} \, ,
\end{equation}
and multiplying on the left by $\mathbf{M}^{-1}$, the equation of motion \eqref{eomgen} can be recast in the Schr\"odinger-like form~\eqref{eq:general_system}, where 
\begin{equation}
\mathbf{V}=\omega^2 - \mathbf{M}^{-1} \left(\mathbf{C}-\frac{1}{4} \mathbf{B}^2 \right) \mathbf{M} \,,
\end{equation}
is in general a function of $r_\star$, $\omega$ and $\ell$. Moreover, thanks to the unitarity of $ \mathbf{M}$, the anti-hermiticity of $ \mathbf{B}$ and the hermiticity of $ \mathbf{C}$, the potential $\mathbf{V}$ is manifestly hermitian.

In general it might be difficult to find the explicit expression for the matrix $\mathbf{M}$. The matter simplifies significantly, however, in the special case in which the matrix $\mathbf{B}$ is anti-symmetric, and $N=2$ (a coupled system of two equations). In this case the matrix $\mathbf{M}$ is orthogonal, and for $N=2$ it can be parametrized in terms of a single angle $\beta(r_\star)$. Performing the most general orthogonal transformation and requiring the cancellation of the term proportional to $\partial_{r_\star}\psi$ gives a first order differential equation for $\beta(r_\star)$ that can be always solved explicitly to obtain a closed form expression for the hermitian potential $\mathbf{V}$.

\section{Proof of the generalized adiabatic theorem}
\label{app:adiabatic}

In this appendix we present the detailed proof of the generalized adiabatic theorem described in section~\ref{sec:adiabatic}.
The proof follows the standard approach of the quantum mechanical adiabatic theorem (see for instance~\cite{Sakurai:2011zz}), but does not rely on the hermiticity of the evolution operator.

\begin{theorem}
Consider a finite dimensional system described by a time-dependent Schr\"odinger equation with (possibly non-Hermitian) evolution operator $\mathbf{K}(\tau)$. Assume moreover that the spectrum of $\mathbf{K}$ is real and non-degenerate for a range of $\tau\in [\tau_{i},\tau_{f}]$, with instantaneous eigenvalues $\lambda_{a}(\tau)$ in increasing order for $a=1,\dots,2N$ and corresponding eigenstates $\lvert k_{a}(\tau) \rangle$. Then, in the limit $\partial_{\tau}\mathbf{K}(\tau)\rightarrow 0$ and away from level crossing, if the system at time $\tau_{i}$ is in the instantaneous eigenstate $\lvert k_{a}(\tau_{i}) \rangle$ of $\mathbf{K}(\tau_{i})$ for some fixed $a$, at time $\tau_{f}$ it will be in the eigenstate $\lvert k_{a}(\tau_{f}) \rangle$, up to a scalar factor.
\end{theorem}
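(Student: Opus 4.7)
The strategy is to carry out the classical Born--Fock argument, adapted to the non-Hermitian setting through systematic use of the dual basis $\langle \tilde{k}_a|$ introduced in Section~\ref{sec:adiabatic}. First I would expand the solution of $\I \partial_\tau |\psi(\tau)\rangle = \mathbf{K}(\tau) |\psi(\tau)\rangle$ in the instantaneous right-eigenbasis,
\begin{equation*}
|\psi(\tau)\rangle = \sum_b c_b(\tau)\, e^{-\I \theta_b(\tau)}\, |k_b(\tau)\rangle, \qquad \theta_b(\tau) \equiv \int_{\tau_i}^\tau \lambda_b(\tau')\, \D \tau' ,
\end{equation*}
with initial condition $c_b(\tau_i)=\delta_{ba}$, corresponding to the hypothesis that the system starts in $|k_a(\tau_i)\rangle$. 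Substituting into the evolution equation and using $\mathbf{K}|k_b\rangle = \lambda_b|k_b\rangle$, the eigenvalue terms cancel; projecting with $\langle \tilde{k}_c|$, which is well defined and smooth in $\tau$ by the non-degeneracy and diagonalizability assumption, produces the coupled ODE system
\begin{equation*}
\dot c_c(\tau) + \sum_b c_b(\tau)\, e^{\I(\theta_c - \theta_b)}\, \langle \tilde{k}_c | \partial_\tau k_b \rangle = 0 .
\end{equation*}

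Second, I would establish the key matrix-element identity by differentiating $\mathbf{K}|k_b\rangle = \lambda_b|k_b\rangle$ with respect to $\tau$ and contracting on the left with $\langle \tilde{k}_c|$, $c\neq b$. Using the left-eigenvalue property~\eqref{eq:left_eig}, this yields
\begin{equation*}
\langle \tilde{k}_c | \partial_\tau k_b\rangle = \frac{\langle \tilde{k}_c | \dot{\mathbf{K}} | k_b\rangle}{\lambda_b - \lambda_c}, \qquad c\neq b,
\end{equation*}
the non-Hermitian analogue of the familiar formula; the denominator is bounded away from zero on $[\tau_i,\tau_f]$ by the non-degeneracy (no-level-crossing) hypothesis. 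The diagonal term $\langle \tilde{k}_a|\partial_\tau k_a\rangle$ only adds a geometric-type phase/amplitude factor that can be absorbed into the overall scalar factor allowed by the statement.

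Third, I would conclude by the standard oscillatory-integral argument. Using the identity above, the off-diagonal amplitudes satisfy
\begin{equation*}
\dot c_c(\tau) = -\sum_{b\neq c} c_b(\tau)\, e^{\I(\theta_c - \theta_b)}\, \frac{\langle \tilde{k}_c|\dot{\mathbf{K}}|k_b\rangle}{\lambda_b - \lambda_c} .
\end{equation*}
Integrating from $\tau_i$ to $\tau_f$ and integrating by parts transfers one derivative onto the slow coefficient while producing an additional denominator $\I(\lambda_b-\lambda_c)$. Here it is crucial that the eigenvalues be \emph{real}, so that $e^{\I(\theta_c-\theta_b)}$ is genuinely oscillatory and not exponentially growing or decaying, which is what makes the oscillatory cancellation effective. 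In the adiabatic limit $\partial_\tau\mathbf{K}\to 0$ with gaps bounded below, both the boundary and bulk contributions vanish, so $c_c(\tau_f)\to 0$ for $c\neq a$; therefore $|\psi(\tau_f)\rangle$ is proportional to $|k_a(\tau_f)\rangle$, up to the dynamical phase $e^{-\I\theta_a(\tau_f)}$ and the diagonal (geometric) factor, as claimed.

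The main obstacle I anticipate is securing the uniform estimates needed for the last step: smoothness and uniform boundedness of the dual vectors $\langle \tilde{k}_a(\tau)|$, together with a uniform lower bound on the spectral gaps $|\lambda_b(\tau)-\lambda_c(\tau)|$ on $[\tau_i,\tau_f]$. For Hermitian operators both follow from orthonormality; here they must be extracted from the continuity and invertibility of the similarity transformation that diagonalizes $\mathbf{K}(\tau)$, combined with the real, non-degenerate spectrum. Once such bounds are in place the integration-by-parts estimate proceeds as in the Hermitian case, giving the familiar $\mathcal{O}(\|\dot{\mathbf{K}}\|/(\Delta\lambda)^2)$ suppression of inter-level transitions.
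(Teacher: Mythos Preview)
Your proposal is correct and follows essentially the same Born--Fock argument as the paper: expansion in the instantaneous right-eigenbasis, projection with the dual vectors $\langle\tilde{k}_c|$, and the off-diagonal matrix-element identity obtained by differentiating the eigenvalue equation. The paper simply drops the off-diagonal terms in the limit $\partial_\tau\mathbf{K}\to 0$ after noting the phase factors are bounded, whereas you go a step further with the integration-by-parts/oscillatory-integral estimate and flag the uniform bounds on the dual basis; this adds rigor but does not change the strategy.
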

\begin{proof}
Consider the basis of instantaneous eigenstates of $\mathbf{K}(\tau)$:
\begin{equation}
\label{eq:inst_eig}
\mathbf{K}(\tau) \lvert k_{a}(\tau) \rangle = \lambda_{a}(\tau) \lvert k_{a}(\tau) \rangle.
\end{equation}
We decompose the state $\lvert \psi(\tau) \rangle$ obtained from the time evolution with prescribed initial condition in terms of instantaneous eigenstates, as follow:
\begin{equation}
\label{eq:decomposition}
\lvert \psi(\tau) \rangle = \sum_{a} c_{a}(\tau) e^{-\mathrm{i} \phi_{a}(\tau)} \lvert k_{a}(\tau) \rangle,
\end{equation}
where the phase factor $\phi_{a}$ has been introduced for convenience and is defined by 
\begin{equation}
\phi_{a}(\tau)= \int_{\tau_{i}}^{\tau} \lambda_{a}(\tau') {\rm d}\tau'.
\end{equation}
The Schr\"odinger equation $\mathrm{i}\partial_{\tau}\lvert \psi(\tau) \rangle = \mathbf{K}(\tau) \lvert \psi(\tau) \rangle$ takes now the simple form
\begin{equation}
\sum_{a} e^{-\mathrm{i} \phi_{a}(\tau)} \left( \dot{c}_{a}(\tau) + c_{a} \partial_{\tau} \right) \lvert k_{a}(\tau) \rangle=0.
\end{equation}
Projecting on the basis of instantaneous eigenstates by acting with the equal time dual vectors $\langle \tilde{k}_{b}(\tau) \rvert$ it follows that
\begin{equation}
\dot{c}_{b}(\tau) = - \sum_{a} c_{a}(\tau) e^{-\mathrm{i}(\phi_{a}-\phi_{b})} \langle \tilde{k}_{b}(\tau) \rvert \partial_{\tau}\lvert k_{a}(\tau) \rangle.
\end{equation}
It is now useful to separate the cases $a=b$ and $a\neq b$.
For $a\neq b$, by taking the time derivative of eq.~\eqref{eq:inst_eig}, projecting by acting with the dual vectors $\langle \tilde{k}_{b}(\tau) \rvert$ and using eq.~\eqref{eq:left_eig} we obtain
\begin{equation}
\langle \tilde{k}_{b}(\tau) \rvert \dot{\mathbf{K}}(\tau) \lvert k_{a}(\tau) \rangle = \left( \lambda_{a}- \lambda_{b} \right) \langle \tilde{k}_{b}(\tau) \rvert \partial_{\tau}\lvert k_{a}(\tau) \rangle,
\end{equation}
where $\dot{\mathbf{K}}(\tau)= \partial_{\tau}\mathbf{K}$.
We arrive at:
\begin{equation}
\dot{c}_{b}(\tau) = - \langle \tilde{k}_{b}(\tau) \rvert \partial_{\tau}\lvert k_{b}(\tau) \rangle \; c_{b}(\tau) - \sum_{a\neq b} c_{a}(\tau) e^{-\mathrm{i}(\phi_{a}-\phi_{b})} \dfrac{\langle \tilde{k}_{b}(\tau) \rvert \dot{\mathbf{K}}(\tau) \lvert k_{a}(\tau) \rangle}{\lambda_{a}-\lambda_{b}}.
\end{equation}
Since the spectrum is non-degenerate and the phase factors $\phi_{a}$ are real (thanks to the reality of the eigenvalues $\lambda_{a}$), the terms $a\neq b$ can be safely neglected in the limit $\partial_{\tau}\mathbf{K}(\tau)\rightarrow 0$, so that we obtain:
\begin{equation}
\dot{c}_{b}(\tau) = - \langle \tilde{k}_{b}(\tau) \rvert \partial_{\tau}\lvert k_{b}(\tau) \rangle \; c_{b}(\tau),
\end{equation}
which implies $c_{b}(\tau) = e^{\mathrm{i} \theta_{b}(\tau)} c_{b}(\tau_{i})$.
By plugging $c_{a}(\tau)$ in eq.~\eqref{eq:decomposition} we arrive at the desired result.
\end{proof}
The term $e^{\mathrm{i} \theta_{b}(\tau)}$ is an adiabatic factor that in the quantum mechanical case, corresponding to Hermitian $\mathbf{K}(\tau)$, can be shown to be a pure phase and gives rise to Berry's phase when a parameter of the system is varied adiabatically and cyclically.

\section{Asymptotic approximation for the mixing correction}
\label{app:asymptotic}

The function 
\begin{equation}
F_{n}(y)= \left(\dfrac{1}{2^{n-1}\cdot n!}\right) \dfrac{1}{y} \int_{0}^{+\infty} P_{n}(\sigma) e^{-\sigma^{2}} e^{-2 y \sigma} {\rm d}\sigma
\end{equation}
appearing in the mixing correction to the quasinormal modes in equation~\eqref{eq:mixing_qnm} admits an asymptotic expansion in the limit $|2y| \rightarrow \infty$ under the assumption that $|{\rm arg}(2y)|<\frac{\pi}{2}$. Indeed, expanding the function $P_{n}(\sigma)e^{-\sigma^{2}}$ in powers of $\sigma$ around $\sigma=0$, $P_{n}(\sigma)e^{-\sigma^{2}}= \sum_{k=0}^{\infty} a_{n,k} \sigma^{k}$, and applying Watson's lemma (see \emph{e.g.} Theorem 15.2.8 of~\cite{Simon:2015}) it follows that in the limit $|2y| \rightarrow \infty$ (for $|{\rm arg}(2y)|<\frac{\pi}{2}$):
\begin{equation}
F_{n}(y) \sim \left(\dfrac{1}{2^{n-1}\cdot n!}\right) \dfrac{1}{y} \sum_{k=0}^{\infty} a_{n,k} \dfrac{k!}{(2y)^{k+1}}.
\end{equation}
The polynomial $P_{n}(\sigma)$ has only even powers of $\sigma$ and has degree $2n+2$:
\begin{equation}
P_{n}(\sigma)= P_{n,0} + P_{n,2} \sigma^{2} + \dots + P_{n,2n+2} \sigma^{2n+2},
\end{equation} 
therefore the coefficients $a_{n,k}$ with odd $k$ are all identically zero and the first few coefficients $a_{n,k}$ can be expressed in term of the $P_{n,k}$ as follows:
\begin{equation}
a_{n,0}= P_{n,0}, \qquad a_{n,2}= P_{n,2}-P_{n,0}, \qquad a_{n,4}= P_{n,4}-P_{n,2}+ P_{n,0}/2, \qquad \dots
\end{equation}
The leading term of the asymptotic expansion is proportional to $P_{n,0}$, which can be computed explicitly from the definition of $P_{n}(\sigma)$ to be
\begin{equation}
a_{n,0}=P_{n,0}= \dfrac{1}{\sqrt{\pi}} \int_{-\infty}^{+\infty} \rho^{2} H_{n}^{2}(\rho) e^{-\rho^{2}} {\rm d}\rho = 2^{n} (n!) \left(n+\dfrac{1}{2}\right),
\end{equation}
from which it follows that
\begin{equation}
F_{n}(y) \sim \left(n+\dfrac{1}{2}\right) \dfrac{1}{y^{2}} + \dots ,
\end{equation}
so that in this limit the mixing correction to the quasinormal modes is approximately:
\begin{equation}
\left(\delta\omega^{2}_{n}\right)^{(2)} = \varepsilon^2 \dfrac{b_{1}c_{1}}{a_{2}} F_{n}\left(\dfrac{\alpha_{n}}{\sqrt[4]{a_{2}}}\right) \simeq \varepsilon^2 \dfrac{b_{1}c_{1}}{a_{2}} \left(n+\dfrac{1}{2}\right) \dfrac{\sqrt{a_{2}}}{\alpha_{n}^{2}}.
\end{equation}
From the definition of $\alpha_{n}^{2}$ we have
\begin{equation}
\dfrac{\alpha_{n}^{2}}{\sqrt{a_{2}}} = - (2n+1) + \I \dfrac{(d_{0}-a_{0})}{\sqrt{a_{2}}}.
\end{equation}
Whether this asymptotic expansion provides an accurate approximation or not depends on the value of the parameters characterizing the potential and should be checked on a case by case basis.

\bibliographystyle{JHEP}
\addcontentsline{toc}{section}{References}

\providecommand{\href}[2]{#2}\begingroup\raggedright\endgroup

\end{document}